\documentclass[11pt]{article}
\usepackage[margin=2cm, top=2cm, bottom=2.5cm,a4paper]{geometry}
\usepackage{bbding}
\usepackage{authblk}
\usepackage{bbm}
\usepackage{comment} 
\usepackage{fancyhdr} 
\usepackage{fancyvrb} 
\usepackage{graphicx} 
\usepackage{listings} 
\usepackage{mathtools}
\usepackage[version=3]{mhchem}
\usepackage{caption, threeparttable}
\usepackage{paralist} 
\usepackage{url} 

\usepackage[english]{babel}

\usepackage{lmodern}
\usepackage{amsmath}
\usepackage{amsthm}
\usepackage{amssymb}
\usepackage[hang, flushmargin]{footmisc}
\usepackage{hyperref}
\usepackage{footnotebackref}
\usepackage{setspace}
\usepackage{pdflscape}
\usepackage{xcolor}
\usepackage{pdflscape}
\usepackage{caption}
\usepackage{float}
\usepackage{subcaption}

\usepackage{newtxtext}

\def\be{\begin{equation}}
\def\ee{\end{equation}}
\def\bes{\begin{equation*}}
	\def\ees{\end{equation*}}
\def\bea{\begin{equation} \begin{aligned}}
\def\eea{\end{aligned} \end{equation}}
\def\beas{\begin{equation*} \begin{aligned}}
		\def\eeas{\end{aligned} \end{equation*}}

\def\bi{\begin{itemize}}
	\def\ei{\end{itemize}}

\newcommand{\R}{\mathbb{R}}

\newcommand{\E}{\mathbb{E}}

\newcommand{\XXX}{\mathcal{X}}
\newcommand{\BBB}{\mathcal{B}}

\newcommand{\GGG}{\mathcal{G}}

\newcommand{\LLL}{\mathcal{L}}

\newcommand{\xx}{\mathbf{x}}
\newcommand{\yy}{\mathbf{y}}

\newcommand{\FF}{\mathbf{F}}
\newcommand{\GG}{\mathbf{G}}
\newcommand{\HH}{\mathbf{H}}

\newcommand{\dd}{\mathrm{d}}

\newcommand{\lp}{\left(}
\newcommand{\rp}{\right)}

\newcommand{\beq}{\begin{equation}}
\newcommand{\eeq}{\end{equation}}
\newcommand{\beqs}{\begin{subequations}}
\newcommand{\eeqs}{\end{subequations}}
\newcommand{\balign}{\begin{align}}
\newcommand{\ealign}{\end{align}}

\newcommand{\benorfig}{\begin{figure}\centering\begin{measuredfigure}}
\newcommand{\enorfig}{\end{measuredfigure}\end{figure}}

\newtheorem{remark}{Remark}[section]







\newtheorem{proposition}{Proposition}[section]

\begin{document}

\title{On Some Aspects of the Response to \\ Stochastic and Deterministic Forcings}

\author[1,2,3]{Manuel Santos Guti\'errez \thanks{Email: \texttt{manuel.santos-gutierrez@weizmann.ac.il}}}
\author[1,2]{Valerio Lucarini\thanks{Corresponding author. Email: \texttt{v.lucarini@reading.ac.uk}}} 
\affil[1]{Department of Mathematics and Statistics, University of Reading, Reading, U.K.}
\affil[2]{Centre for the Mathematics of Planet Earth, University of Reading, Reading, U.K.}
\affil[3]{Department of Earth and Planetary Sciences, Weizmann Institute of Science, Rehovot 76100, Israel}

\maketitle

\begin{abstract}
 The perturbation theory of operator semigroups is used to derive response formulas for a variety of combinations of acting forcings and reference background dynamics. {\color{black}In the case of background stochastic dynamics, we decompose the response formulas using the Koopman operator generator eigenfunctions} and the corresponding eigenvalues, thus providing a functional basis towards identifying relaxation timescales and modes in physically relevant systems. To leading order, linear response  gives the correction to expectation values due to extra deterministic forcings acting on either stochastic or chaotic dynamical systems. When considering the impact of weak noise, the response is linear in the intensity of the (extra) noise for background stochastic dynamics, while the second order response given the leading order correction when the reference dynamics is chaotic. In this latter case we clarify that 
 previously published diverging results can be brought to common ground when a suitable interpretation--- Stratonovich vs. Ito--- of the noise is given. 
 Finally, the response of two-point correlations to perturbations is  studied through the resolvent formalism via a perturbative approach. Our results allow, among other things, to estimate how the correlations of a chaotic dynamical system changes as a results of adding stochastic forcing.
\end{abstract}

\tableofcontents

\section{Introduction}

Response theory aims at predicting the how the statistical properties of a system under investigation changes as a result of the application of (weak) external stimuli. In linear approximation, the cornerstone result to this end is the fluctuation dissipation theorem (FDT) \cite{kubo,kubo1966}, which relates the statistics of the unforced fluctuations and the relaxation properties of the system with its response to external forcings and,  specifically allows for writing the Green function describing the response of the system as a correlation between observables of the unperturbed state. Hence, the FDT classically provides a powerful tool for predicting the properties of forced fluctuations from those of the unforced ones. See \cite{Lacorata2007,marconi2008,Sarracino2019} for a comprehensive summary of the main results and more modern perspectives, and \cite{LucariniColangeli2012} for an extension of the theory to the nonlinear case. 

{\color{black}The applicability of the FDT relies on the existence of a smooth (with respect to Lebesgue) invariant measure, which allows for applying an integration by parts in the derivation of the final formulas. In the case of deterministic dynamics, this is  viable only for systems with vanishing average phase space contraction, as in the case of physical system at equilibrium.} When  non-conservative forces come into play, phase space volume contracts, making the invariant measure describing the statistical steady state singular with respect to the Lebesgue measure. In this case, one needs to resort to  more general response formulas. Ruelle \cite{ruelle_nonequilibrium_1998,ruelle2009} introduced a response theory able to predict the change in the statistical properties of Axiom A systems due to small perturbation to its dynamics. Directly implementing the Ruelle response formulas is particularly nontrivial in the case of deterministic chaotic systems, because of the radically different properties of the tangent space in its unstable and stable directions \cite{abramov2007}. {\color{black}Fortunately, recent contributions based on adjoint and shadowing methods seem to provide a convincing way forward  \cite{Wang2013,Chandramoorthy2020,Ni2020,Sliwiak2022}}. Ruelle's response theory has been then reformulated and extended through the viewpoint of functional analysis by studying via a perturbative expansion the corrections to the transfer operator \cite{B00} due to the applied forcing \cite{liverani2006,BL07,B08}. {\color{black}Systems of stochastic differential equations (SDEs) \cite{Khasminskii1966,kloeden_1992,arnold1998}} typically posses smooth invariant probability densities, which ensures the applicability of the FDT and the straightforward  derivation of linear response formulas \cite{Hanggi1982,Hairer2010,pavliotisbook2014}. {\color{black}See also the the recent work by \cite{dieball_2022}, where the Feynman-Kac formalism is used to derive expressions for the covariances of general additive dynamical functionals for stochastic dynamical systems, and the comprehensive review by \cite{maes_2020aa}, where a trajectory-based approach on the computation of the response focusing on  computing and interpreting the time-symmetric and time-antisymmetric contributions is discussed in detail.} 


Linear response theory has been used  to study sensitivity and transport properties of a plethora of physical problems and systems, including  stochastic resonance \cite{gammaitoni1998}, optical materials \cite{Lucarini2005},  simple toy models of chaotic dynamics \cite{reicklinear2002,Cessac2007,lucarini2009b,Lucarini2011}, {\color{black}Markov chains \cite{mitrophanov_2004,Lucarini2016,Antown2018,SantosGutierrez2021}},  neural networks \cite{Cessac2019},  turbulence \cite{Kaneda_020},  galactic dynamics \cite{BinneyTremaine2008},  financial markets \cite{Puertas2021},  plasma \cite{Nambu1876}, interacting identical agents \cite{Lucarini2020PRSA,Zagli2021,Amadori2021}, optomechanical systems \cite{Motazedifard2021}, and the climate system \cite{Leith1975,North1993,Lucarini2017,Lembo2020,Bastiaansen2021}, just to name  a few; see also the many examples discussed in \cite{Ottinger2005}, the  theoretical developments presented in \cite{lucarini2008,Baiesi2013,Barbier2018,Lucarini2018JSP,Sarracino2019,Wormell2019}, and the recent special issue edited by Gottwald \cite{Gottwald2020}.


When considering SDEs, the theory of Markov semigroups and of their spectral decomposition (see \cite{Tantet2018b,chekroun2019c}) can be applied to explain the properties of correlation functions. 
In the formal limit of weak noise, the spectral properties resemble those of the underlying deterministic component (cf. \cite{kellerliverani1999}) but, as mentioned above, in this case formal problems emerge when trying to relate correlations with response formulas. 

{\color{black}The first goal of this note is to study the linear response of SDEs to perturbations that can affect either the drift term or the stochastic forcing.} 
Our strategy will rely on investigating the impact of applying perturbations to the dynamics in terms of corrections to the Fokker-Planck equation or to the backward-Kolmogorov equation, which decribe the evolution of the smooth measures and observables of the system, respectively. This will leads us to computing the Green functions of the system and subsequently express it in terms of Koopman modes and the corresponding eigenvalues \cite{Mezic2005,Budinisic2012}{\color{black}, taking advantage of the spectral theory of Markov semigroups.} The basic derivations are revisited here in Section~\ref{sec:time dependent forcing}. {\color{black}The advantage of taking this angle on the problem is that one can relate the relaxation and oscillatory time scales of the Koopman modes, which are intrinsic to the unperturbed system, with the time-dependent response of general observables for general forcings. Additionally, we derive similalry general explicit expressions for the spectral properties of the frequency-dependent response of the system. Such a connection is particularly illuminating in the case of systems whose Koopman modes possess a clear distinction between the point and the residual spectrum \cite[Theorems 4-6]{SantosGutierrez2021,chekroun2019c}, as discussed in detail below.}

We will also explore another aspect of the relationship between the response of a system to perturbations and stochasticity. Stochastic perturbations to deterministic systems have been widely studied in the literature \cite{kifer1988,freidlin1998} and the investigation of the weak-noise limit in the case the dynamics of the underlying system is chaotic is of key importance for constructing the so-called physical measure \cite{Eckmann1985,ruelle1989}. The reader is also referred to \cite[Chapter 7]{gardiner2009} for an introduction to the impact of stochastic perturbations and to \cite{Blank1998,kellerliverani1999} for abstract results concerning the transfer operator. In Section~\ref{sec:stochastically forced} we shall, nonetheless, follow previous work based on response theory whereby, taking advantage of Ruelle's formalism, the effects of adding noise to a deterministic system are computed \cite{Lucarini2012}. The same question is tackled in \cite{Abramov2017}, although the author resorts to expanding the perturbed flow in terms of the tangent map leading to an apparently different formula to that found in \cite{Lucarini2012}. The relationship between these two formulas will be discussed below in Proposition~\ref{prop:ito stratonovich}, elucidating the importance of considering the correct interpretation for the noise term. 

There is a somewhat surprising lack of studies focusing on the systematic investigation of the response of higher-order correlations to forcings. As shown in \cite{lucarini2017b}, this problem is instead of relevance also for understanding the sensitivity to coarse-grained representations of the system of interest. {\color{black}As additional output of our work,} in Section~\ref{sec:response correlations} we clarify that it is possible to extend the results presented in \cite{lucarini2017b} to the case of perturbations acting on either the drift of the noise law of a system, whether its background dynamics is deterministic or stochastic. We then achieve a possibly useful extension of the response formulas able to encompass many cases of interest where one wishes to  study the sensitivity of the correlations of a system to perturbations. 

Finally, in Section \ref{sec:general remarks} we draw our conclusions and present perspectives for future work. Appendix~\ref{sec:homogeneous equation} revisits the homogeneous equation for the response derived originally in \cite{kenkre1971} where, here, we rewrite such equation in terms of the stochastic analysis language employed throughout this note. 

\section{Semigroups and Response}\label{sec:time dependent forcing}

In this section, the construction of the Green function for an SDE is revisited using Fokker-Planck operator expansions. We start by considering the $d$-dimensional It\^o SDE with deterministic drift $\FF : \R^d \longrightarrow \R^d$, perturbation field $\GG:\R^d \longrightarrow \R^d$, modulated by the bounded function $g: \R \longrightarrow \R$, {\color{black}and noise law given by the $d\times p$ matrix $\Sigma \in \R^{d\times p}$, $p\geq 1$}:
 \begin{equation}\label{eq:sto ode 2}
\dd \xx (t) = \left[\FF (\xx)  + \varepsilon g(t)\GG(\xx) \right]\dd t+\Sigma(\xx)\dd W_{t},
\end{equation}
where $W_t$ is a {\color{black}$p$}-dimensional Wiener process and $\varepsilon$ is a real number. We assume that the SDE \eqref{eq:sto ode 2} generates a process $\xx(t)$ in $\R^d$ for an initial condition at a certain value of time. The SDE is is autonomous when $\varepsilon=0$. 
When $\varepsilon \neq 0$, a time-modulated perturbation vector field $\GG$ is activated through the  function $g$. The evolution of probability density functions associated with the SDE \eqref{eq:sto ode 2}, is given by the Fokker-Planck equation:
\begin{equation}\label{eq:fpe 2}
\partial _t\rho =\LLL(t) \rho= -\nabla \cdot \lp \FF\rho \rp  - \varepsilon g(t)  \nabla \cdot \lp \GG\rho \rp +\frac{1}{2}\nabla^2:\lp  \Sigma \Sigma^{\top}\rho \rp,
\end{equation}
where the $\Sigma \Sigma^{\top}$ is the noise covariance matrix and 
the operator $\LLL$ is divided into two operators: $\LLL_0:D(\LLL_0)\subseteq \mathcal{B} \longrightarrow \mathcal{B}$ and $\LLL_1:D(\LLL_1)\subseteq \mathcal{B} \longrightarrow \mathcal{B}$ so that $\LLL(t) = \LLL_0 +\varepsilon g(t) \LLL_1 $ and their domains, denoted by $D(\cdot)$, are densely defined in a suitable Banach space $\BBB$ and are assumed to satisfy $D(\LLL (t))=D(\LLL_0)=D(\LLL_1)$. These operators are defined as:
\begin{subequations}
	\begin{align}
	\LLL_0 \rho&= -\nabla \cdot \lp \FF \rho \rp + \frac{1}{2}\nabla ^2 : \lp \Sigma\Sigma^{\top}\rho \rp  ;\label{eq:pert op fokker planck equation 1} \\
	\LLL_1 \rho &= -\nabla \cdot \lp \GG \rho \rp \label{eq:pert op 1},
	\end{align}
\end{subequations}
for every $\rho $ in  $D(\LLL_0)$. We immediately identify that $\LLL_1$ is  the operator corresponding to the perturbation introduced in Eq.~\eqref{eq:sto ode 2} and hence, we refer to it as the perturbation operator. Because the right-hand side (RHS) of Eq.~\eqref{eq:sto ode 2} is linear, its solutions when $\varepsilon = 0$ are given by the exponential map family $\{ e^{t\LLL_0}\}_{t\geq0}$, which we call the semigroup generated by $\LLL_0$. In fact, $\LLL_0$ generates a strongly continuous semigroup so that the properties of the exponential function are inherited \cite{engel2000}. It is noted, however, that $\LLL$ is time-dependent, so the way it generates an evolution semigroup is by means of the \emph{time-ordered} exponential operator \cite{gill2017}. 

The regularity of the transition probabilities in Eq.~\eqref{eq:fpe 2} and encoded in $\mathcal{B}$ is not granted by merely adding noise, but one has to ensure that the latter spreads in all directions. To this end, H\"ormander's condition \cite{Hoermander.1967} is invoked so that at any point in phase space $\XXX$, the tangent linear is recovered by the directions generated by the second order differential operator in $\LLL_0$ stemming from the noise. Hence, the smoothness and boundedness of the coefficients of Eq.~\eqref{eq:sto ode 2} together with H\"ormander's condition ensure the existence of smooth transition probabilities solving Eq.~\eqref{eq:fpe 2} and an invariant probability density function $\rho_0$ that solves $\LLL_0\rho_0\equiv0$ \cite{pavliotisbook2014}. 

Because the SDE is non-autonomous, the density functions $\rho$ are dependent on the phase variable $\xx$ at time $t$ and the initial condition $\xx_0$ at time $s$. If one, instead, assumes that the initial condition $\xx_0$ at time $s=0$ is distributed according to the unperturbed state $\rho_0$ ($\varepsilon=0$), we can understand the densities in Eq.~\eqref{eq:fpe 2} as
\begin{equation}
\rho(\xx,t):=\int \rho(\xx,\yy,t,0)\rho_0(\yy)\dd \yy,
\end{equation}
so that $\rho(\xx,t)$ determines the probability of encountering the process in $\xx$ at time $t$ for a $\rho_0$-distributed initial condition; see also \cite[Chapter 2]{pavliotisbook2014} for more general considerations. 

The goal now is to calculate the distribution function for the perturbed system i.e., when $\varepsilon \neq 0$. To this end, we shall solve Eq.~\eqref{eq:fpe 2} for successive orders of $\varepsilon$. Let us assume that a solution of Eq.~\eqref{eq:fpe 2} $\rho$ can be written as:
\begin{equation}\label{eq:expansion time dependent density}
\rho_\varepsilon = \rho_0 + \varepsilon \rho_1 + \mathcal{O}(\varepsilon^2)
\end{equation}
{\color{black}Such an asymptotic expansion is the starting point of virtually all linear response formulas for statistical mechanical systems. We assume here that our systems of interest comply with the conditions discussed in \cite{Hairer2010}. A discussion of the radius of convergence of the expansion in the context of finite-state Markov processes is presented in \cite{Lucarini2016,SantosJSP}.} 

While $\rho_0$ is provided, $\rho_1$ is obtained by plugging Eq.~\eqref{eq:expansion time dependent density} into Eq.~\eqref{eq:fpe 2} and gathering, first, the $\varepsilon$-order terms and applying the variation-of-parameters formula:
\begin{equation}
\rho_1(\cdot,t) = \int_{0}^te^{(t-s)\LLL_0}g(s)\LLL_1\rho_0 \dd s,\label{rho1}
\end{equation}
where the time-modulation $g$ appears in the integrand. The $k^{th}$ order element of Eq.~\eqref{eq:expansion time dependent density} is obtained analogously by gathering $\varepsilon^k$-terms \cite{kubo,ruellegeneral1998,lucarini2008}. {\color{black}The expectation value of a general observable $\Psi$  can be written as}:
\begin{align}
	\left \langle \Psi , \rho(\cdot,t) \right \rangle &=\sum_{k=0}^{\infty}\delta^{(k)}\left[\Psi \right](t)= \int \Psi(\xx) \rho_0(\xx)\dd \xx+ \varepsilon \int \Psi(\xx) \rho_1(\xx, t)\dd \xx + \mathcal{O}\lp \varepsilon^2 \rp. \label{eq:response function a}
\end{align}
Hence, after integrating by parts and employing the dual representation with ``$\ast$'', the \emph{linear response} accounting for the first order corrections is:
\begin{equation}\label{eq:linear response time dependent}
\delta^{(1)}[\Psi] (t) = \varepsilon \int_{-\infty}^{\infty} g(s)\int \Theta(t-s)e^{(t-s)\LLL^{\ast}_0}\Psi(\xx) \LLL_1\rho_0(\xx) \dd \xx \dd s=\varepsilon  \lp g \ast \GGG \rp (t).
\end{equation}
{\color{black}where $\Theta(t)$ is the Heaviside distribution}. Specifically, the generator of the unperturbed backward-Kolmogorov equation  $\LLL_0^\ast$ and the analogous perturbation operator $\LLL_1^{\ast}$ act on measurable observables $\Psi$ and can be written as 
\begin{subequations}
\begin{align}
\LLL_0^\ast \Psi&= \FF \cdot \nabla \Psi  + \frac{1}{2}\Sigma\Sigma^{\top}: \nabla ^2  \Psi ; \label{L0aststo} \\
\LLL_1^\ast \Psi &= \GG \cdot \nabla \Psi.\label{L1asta}
\end{align}
\end{subequations}
Furthermore, the Green function $\GGG (t)=\Theta(t)\int e^{t\LLL^{\ast}_0}\Psi(\xx)\LLL_1 \rho_0(\xx) \dd \xx$ has a non-negative support because of \emph{causality} \cite{ruelle2009,Lucarini2018JSP}. 
Equation \ref{eq:linear response time dependent} implies that once the Green function is known or computed off-line from the unperturbed regime, the linear response can be readily calculated regardless of the time-modulation $g$.
\begin{remark}
It is also possible to study the impact of applying a perturbation to the stochastic term in Eq.~\eqref{eq:sto ode 2} as follows:
\begin{equation}\label{eq:sto ode 3}
\dd \xx (t) = \FF (\xx)  \dd t+\Sigma(\xx) \dd W_{t}+ \varepsilon g(t)\Gamma (\xx)\dd W_{t},
\end{equation}
{\color{black}where $\Gamma\in\mathbb{R}^{d\times p}$ is a $d\times p$ matrix defining the correction to   
the background noise law. 
Note that we allow for the presence of nontrivial time modulations in the properties of the noise through the bounded function $g(t)$.} Up to first order in $\varepsilon$, the Fokker-Planck equation reads as follows:  
\begin{equation}\label{eq:fpe 2b}
\partial _t\rho =\LLL(t) \rho= -\nabla \cdot \lp \FF\rho \rp  - \varepsilon g(t)  \frac{1}{2}\nabla ^2 : \lp \lp \Sigma\Gamma^{\top} + \Gamma\Sigma^{\top} \rp\rho \rp +\frac{1}{2}\nabla^2:\lp  \Sigma \Sigma^{\top}\rho \rp.
\end{equation}
In this case, one has $\mathcal{L}(t)=\mathcal{L}_0+\varepsilon g(t) \mathcal{L}_1$ where 
$\LLL_1$ is:
\begin{equation}
    \LLL_1 \rho = \frac{1}{2}\nabla ^2 : \lp \lp \Sigma\Gamma^{\top} + \Gamma\Sigma^{\top} \rp\rho \rp.\label{L1stoch}
\end{equation}
Hence, by plugging this expression of $\LLL_1$ into Eqs.~\eqref{rho1}-\eqref{eq:linear response time dependent} one can compute the first order correction to the statistical properties of the system due to perturbations to the stochastic diffusion. The same formulas--- apart from the specific expression of $\LLL_1$--- can be used to evaluate at leading the impact of either deterministic or stochastic forcing onto the system. Additionally, because of linearity, the impact of deterministic (as in Eq.~\eqref{eq:sto ode 2}) and stochastic perturbations (as in Eq.~\eqref{eq:sto ode 3}) can be treated separately and then added up to account for the response of the system to a combined perturbation.

Note that if $\Sigma \Gamma^\top + \Sigma^\top \Gamma$ vanishes (and in particular if $\Sigma\equiv 0$, i.e. if the background dynamics is deterministic) the leading order response to a stochastic perturbation is proportional to $\varepsilon^2$, see \cite{Lucarini2012,Abramov2017} and discussion in Section~\ref{sec:stochastically forced}. Note also that the  {\color{black}dual operator $\LLL_1^\ast$} acting on measurable observables $\Psi$ can be written as: 
\begin{equation}
\LLL_1^\ast \Psi= \frac{1}{2}\lp \Sigma\Gamma^{\top} + \Gamma\Sigma^{\top} \rp : \nabla ^2 \Psi.\label{L1astb}
\end{equation}
{\color{black}Additionally, note also that if we include in Eq.~\eqref{eq:sto ode 3} extra noise terms of the form $\varepsilon f(t) \Delta(\xx)\mathrm{d}V_t$, where  $\Delta\in\mathbb{R}^{d\times q}$ is a $d\times q$ matrix, $f$ is a bounded function, and $V_t$ is a $q$-dimensional Wiener process with no correlation with the previously defined $W_t$  $p$-dimensional Wiener process, their leading-order impact on the measure is $\mathcal{O}(\varepsilon^2)$, because the first order term of Eq.~\eqref{L1stoch} and \eqref{L1astb} would be zero.}    

\end{remark}
\subsection{Spectral Decomposition of the Linear Response}\label{sec:spectral time dependent}
Under mild conditions on the SDE~\eqref{eq:sto ode 2} and its associated Fokker-Planck equation~\eqref{eq:fpe 2}, the semigroup $\{e^{t\LLL_0}  \}_{t\geq 0}$ enjoys an exponential convergence to equilibrium due to $\LLL_0$ having a nontrivial set of dominating eigenvalues \cite[Theorems 4-6]{chekroun2019c}. The target of this section is, thus, to decompose the Green function $\GGG$ in terms of the spectrum of the unforced generator $\LLL_0$ of the Fokker-Planck equation in order to extract the relaxation timescales of the response. For this, we shall further require the semigroup $\{e^{t\LLL_0}\}_{t\geq0}$ to be \emph{quasi-compact}, this is, it approaches the space of compact operators as $t$ tends to infinity in the operator norm \cite{engel2000}. Quasi-compact semigroups are constituted by operators having a strictly negative essential growth bound $r_{ess}$\footnote{The essential growth bound is defined as:\begin{equation*}
	r_{ess} = \inf_{t>0} \frac{1}{t} \log \left\| e^{t\LLL_0} \right\|_{ess},
\end{equation*} where $\|\cdot \|_{ess}$ is the distance to compact operators in the operator norm.}, implying that spectra with modulus larger than $e^{r_{ess}t}$ are eigenvalues of finite algebraic multiplicity, which are related to those of the generator by means of the Spectral Mapping Theorem \cite[Chapter IV]{engel2000}. Namely, if $\lambda_j$ is an eigenvalue of $\LLL_0$ with eigenfunction $\psi_j$, so is $e^{\lambda_jt}$ of $e^{t\LLL_0}$ relative to the same eigenfunction. Therefore, if $\{ \lambda_j \}_{j=1}^{M}$ are $M$ eigenvalues of finite algebraic multiplicity $a_j$ such that $\lambda_j > r_{ess}$ the evolution operator $e^{t\LLL_0}$ is decomposed as:
\begin{equation}\label{Eq_L0_decomp chapter 2}
e^{t\LLL_0} = \sum _{j=1}^{ M} T_j(t) + \mathcal{R}(t),
\end{equation}
where $T_{j}(t)$ denotes the contribution relative to the eigenvalue $\lambda_j$ and $\mathcal{R}$ is the operator accounting for the essential spectrum. The operator $T_j(t)$ is defined as:
\begin{align} \label{eq:spectral projector}
T_j(t)=\sum_{k=0}^{a_j-1}e^{\lambda_jt}\frac{t^k}{k!}\lp \LLL_0 - \lambda_j
\rp^{k}\Pi _j,   \end{align}
where $\Pi _j$ denotes the spectral projector corresponding to the eigenvalue $\lambda_j$. {\color{black}It is worth stressing that Eq.~\eqref{eq:spectral projector} applies also to non-diagonalisable operators. Indeed, in case the eigenvalue $\lambda_j$ has non-unit multiplicity, i.e. $a_j>1$, Eq.~\eqref{eq:spectral projector} accounts for the generalised eigenfunctions and possible ``Jordan blocks''. } For notational convenience, we shall further assume that $\lambda_0=0 > \mathfrak{Re}\lambda_1\geq \mathfrak{Re}\lambda_2\geq \ldots  $. The hope is that, as $t$ tends to infinity, the contributions from the essential spectrum decay. In fact, if $r>\sup\{r_{ess}\}\cup \big\{ \mathfrak{Re}\lambda : \lambda \in \sigma(\LLL_0)\setminus \{\lambda_0,\ldots,\lambda_M\} \big\}$ there exists a constant $C>0$ such that:
\begin{equation}\label{eq:bound residual}
\| \mathcal{R}(t) \| \leq Ce^{r t},
\end{equation}
for all positive values of $t$. This ensures that, as time goes to infinity, the norm of the residual operator decays to zero. The finiteness of the eigenvalues assumed here implies that all of them, but for $\lambda_{0}=0$, are contained in a strip of the complex plane that is at a distance $\gamma=-\mathfrak{Re}\lambda_{1}$ from the imaginary axis. The number $\gamma$ is called the \emph{spectral gap} of $\LLL_0$ and it guarantees that correlations decay exponentially fast \cite[\S 1.3]{chekroun2019c,B00}. 
It must be noted, though, that the point spectrum need not be, in general, finite, so that it is possible that it accumulates around the leading zero eigenvalue. {\color{black}This pathological case is responsible for subexponential decay rates \cite{Ruelle1986} and a signal of the system approaching a critical point, as is the case of the pitchfork bifurcation \cite{gaspard1995} or the critical transitions between competing attractors in atmospheric circulation models \cite{Tantet2018b}.}

The spectral decomposition of quasi-compact operators allows us to naturally link the theory of the Green function with the eigenvalues of the generator of the unperturbed Fokker-Planck semigroup $\LLL_0$. {\color{black}We remark that in the case of smooth invariant measure, such eigenvalues agree with those of the generator of the backward-Kolmogorov equation $\LLL^\ast_0$}. {\color{black}If we assume for the sake of  simplicity that our observable of interest $\Psi$ projects entirely onto the point spectrum of $\LLL^\ast_0$, we obtain the following closed formula:}
\begin{subequations}\label{eq:eigenvalues linear response time dependent}
	\begin{align}
		\delta^{(1)}\left[ \Psi\right](t)&=\varepsilon \lp g \ast \GGG \rp(t)= \varepsilon \int_{-\infty}^{\infty} g(t-s)\GGG(s) \dd s \\&=\varepsilon \int_{-\infty}^{\infty} g(t-s)\int  \Theta(s)e^{s\LLL_0^{\ast}}\Psi(\xx)\LLL_1\rho_0(\xx) \dd \xx \dd s \\ &=\varepsilon \int_{-\infty}^{\infty} g(t-s)\int \lp \Theta(s) \sum_{j=1}^M e^{\lambda_js}\Pi^{\ast} _j \Psi(\xx) \rp \LLL_1\rho_0(\xx) \dd s\dd \xx \\&= \varepsilon \sum_{j=1}^{M}\sum_{k=1}^{a_j-1} \alpha_j^{(k)}\frac{1}{k!} \int_{-\infty}^\infty g(t-s)\Theta(s)e^{\lambda_js}s^k\dd s \\&= \varepsilon \sum_{j=1}^{M}\sum_{k=1}^{a_j-1} \alpha_j^{(k)}\frac{1}{k!} \left[ g \ast \lp \Theta(\circ) e^{\lambda_j \circ }\circ^k\rp \right](t),
	\end{align}
\end{subequations}
where the coefficients $\alpha_j^{(k)}$ are defined as:
\begin{equation}\label{eq:alphas definition}
\alpha^{(k)}_j=\int (\LLL^{\ast}_0 - \lambda_j)^k\Pi^{\ast}_j\Psi (\xx)\LLL_1 \rho_0(\xx)\dd\xx.
\end{equation}
and $\Pi^\ast_j$ is the projector of the observable $\Psi$ on the $j^{th}$ mode of the generator $\LLL_0^{\ast}$. The $\alpha$ coefficients  weight the contribution of the various modes of the response for a given combination of observable and forcing. 

{\color{black}The calculations presented in Eq.~\eqref{eq:eigenvalues linear response time dependent} assume that the observable of interest $\Psi$ projects entirely onto the point-spectrum of $\LLL_0$, but in general, $\Psi$ can possess a significant component contained in the residual spectrum. The quasi-compact assumption, however, ensures that after $1/r$ time units, the residual contribution to the response $\delta^{(1)}[\Psi](t)$ will be negligible, where $r$ was introduced in Eq.~\eqref{eq:bound residual}. In essence, the further away the residual spectrum is from the imaginary axis, the less significant will its contribution be to the response as a function of time. Hence, the reason for neglecting the residual spectrum is twofold. First, the calculations become exact and, second, we emphasize the role of the dominating eigenvalues in determining the decay rates of the response function.}

As clear from the discussion above, these results apply to either a deterministic or a stochastic perturbation to the dynamics of an SDE  or to a combination thereof; one just needs to use the right definition for the operator $\LLL_1$. Notice that the contribution of the eigenvalue $0$ is not present because it is simple and its associated dual eigenfunction is constant, making $\alpha_0 = 0$. We understand here that the spectral gap gauges the sensitivity of the system with respect to external perturbations: if $0<\gamma \ll1$ the linear response  $\delta^{(1)}[\Psi](t)$ to an impulse defined by $g(t)=\delta(t)$, where $\delta(t)$ is Dirac's delta, will decay very slowly to zero for any observable $\Psi$. In physical terms, this means that the negative feedbacks of the system are very weak.

At a practical level, when a system is weakly forced so that its response is in the linear regime, $\delta^{(1)}[\Psi](t)$ can be observed from numerical simulations. In particular, if $g(t)=\delta(t)$, the perturbation would be equivalent to an $\varepsilon$-sized shift in the initial condition of the unperturbed system. This comment raises the question of using empirical measurements of the response $\delta^{(1)}\left[ \Psi\right](t)$ to find the decomposition of the Green function in terms of the point spectrum, which can be thence convoluted against different time-modulations. In effect, Eq.~\eqref{eq:eigenvalues linear response time dependent} provides a basis for the empirically observed response $\delta^{(1)}\left[ \Psi\right](t)$ to project onto.

In Fourier domain, the convolution product defining $\delta^{(1)}\left[\Psi \right](t)$ is transformed into a regular product:
\begin{equation}\label{eq:susceptibilty time dependent}
\mathfrak{F}\left[\delta^{(1)}\left[\Psi\right]\right](\omega)=\varepsilon \mathfrak{F}\left[g\ast \GGG\right](\omega) = \varepsilon \mathfrak{F}\left[g\right](\omega)\mathfrak{F}\left[\GGG\right](\omega),
\end{equation}
where $\mathfrak{F}$ indicates the application of the Fourier transform, $\omega$ is a real frequency and $\mathfrak{F}\left[\GGG\right]$ is the susceptibility function which is analytic in the upper complex plane, by virtue of the causality constraint. Furthermore, if one assumes polynomial decay of $\mathfrak{F}\left[\GGG\right](\omega)$, Kramers-Kronig relations \cite{Lucarini2005},  can be derived, which link the imaginary and real parts of the susceptibility function via Cauchy integrals \cite{ruellegeneral1998,lucarini2008}. Under suitable integrability assumptions, one can, moreover, link the logarithm of the modulus of the susceptibility with its real part, and then use Kramers-Kronig relations to find the imaginary counterpart \cite{Lucarini2012}.


%


Exploiting the spectral decomposition of quasi-compact semigroups and assuming that a given observable does not project onto the residual spectrum, the Fourier transform of the linear response is written as:

\begin{align}\label{eq:susceptibility transform}
\mathfrak{F}\left[\GGG \right](\omega) = \sum_{j=1}^{M}\sum_{k=1}^{a_j-1} \alpha_j^{(k)}\frac{1}{k!}\mathfrak{F} \left[  \Theta e^{\lambda_j \circ }\circ^k \right](\omega)=\sum_{j=1}^{M}\sum_{k=1}^{a_j-1} \frac{\alpha_j^{(k)}}{\lp i\omega - \lambda_j \rp^{k+1}}.
\end{align} 
The susceptibility function $\mathfrak{F}\left[\GGG\right]$ can be meromorphically extended to all values of $\omega$ in $\mathbb{C}$ such that $\mathfrak{Im}\omega > \omega _{ess}$, where the poles \cite{Ruelle1986} correspond to those of the resolvent norm $\| R(\cdot,\LLL_0) \|$ which are precisely given by the eigenvalues of $\LLL_0$ with the exception of that at $\omega = 0$ by the observation made below Eq.~\eqref{eq:alphas definition}. It is then clear that the location of the poles in the susceptibility function depends, uniquely, on the underlying system and not on the time-modulation function $g$, on the applied forcing, or on the observable in question. 

Correspondingly, Eq.~\eqref{eq:susceptibility transform} implies the existence of resonances in the response for $\omega=\mathfrak{Im} \lambda_j$, $j=1,\ldots,M$. Neglecting for the moment the possible existence of nonunitary algebraic multiplicities, the visibility of each resonance (the height of the corresponding peak when considering real values for $\omega$) will, instead, depend on the  width, determined by $\mathfrak{Re} \lambda_j$ (which depends neither on the observable  nor on the applied forcing), and by the weight given by the spectral coefficient $\alpha^{(1)}_j$ (which depends, instead, on both the observable and  the applied forcing). In the limit of a vanishing spectral gap $\gamma$, the spectrum will be dominated by the resonance associated to $\lambda_1$ unless $\alpha^{(1)}_j$ vanishes. 

\begin{remark}
The existence of a smooth invariant density for Eq.~\eqref{eq:sto ode 2} is enough for the fluctuation-dissipation theorem to hold \cite{pavliotisbook2014}. This implies that the spectral decomposition of correlation functions in stochastic systems, see \cite{chekroun2019c}, would equally yield Eqs.~\eqref{eq:eigenvalues linear response time dependent}~and~\eqref{eq:susceptibility transform}. However, in the limit of small noise, density representations and amenable functional settings for non-conservative can fail to exist. Attention is drawn, instead, to the Green function which is possible to define more generally using Ruelle's response theory for deterministic flows \cite{ruelle2009,lucarini2008}. The ultimate goal is therefore to replace $\rho_0(\xx)\dd \xx$ by $\rho_0(\dd\xx)$ in Eq.~\eqref{eq:linear response time dependent} and hope that the spectral decomposition of Eq.~\eqref{eq:eigenvalues linear response time dependent} holds. If such is the case, the relaxation rates would still be encoded in the Green function through the generator $\LLL^{\ast}_0$ while correlation functions would cease to explain the response at leading order.
\end{remark}

\section{Stochastically Perturbed Deterministic Systems}\label{sec:stochastically forced}

The theory of random dynamical systems regards stochastic flows as transformations of phase space parametrised in time but also in the noise realisation \cite{arnold1998}. In this sense, one can view noise as the inhomogeneous component of an equation that makes it, loosely speaking, non-autonomous. Interestingly, when the noise component is relatively weak, it can be regarded as a perturbation to the otherwise unperturbed deterministic system. Consequently, the leading order modification to the system's statistics is expected to be captured through response theory, in the spirit of the previous section.

Indeed, such is the case, although two main issues arise. First, the reference dynamics is deterministic, so that the matrix $\Sigma$ in Eq. \eqref{eq:sto ode 3} vanishes and hence, the unperturbed system will not, in general, posses an absolutely continuous invariant measure with respect to Lebesgue. This prevents the possibility of using density functions when studying the evolution of the measure. Second, the Green function can be seamlessly applied to derive realisation-dependent response formulas although these do not agree with a perturbation expansion of the Fokker-Planck equation. This section, therefore, aims at (i), determining the leading order correction to the statistics using operator relations and (ii), proving that Green function and operator based response formulas are equal only when the former is taken in the Stratonovich setting.



We proceed by considering a stochastically perturbed system that obeys the following It\^o SDE:
\begin{equation}\label{eq:sto ode}
\dd \xx (t) = \FF (\xx)\dd t + \varepsilon \Gamma(\xx)\dd W_t,
\end{equation}
where $\FF: \R^d \longrightarrow \R^d$ is the drift, $\Gamma: \R^d \longrightarrow \R^{d\times p}$ is a perturbation matrix, and $W_t$ denotes a $p$-dimensional Wiener process with correlation matrix equal to the identity. 
Because we are dealing with a deterministic system, we shall avoid density representations and use, instead, the backward-Kolmogorov equation with describes the evolution of expectation values of observables $\Psi$:

\begin{equation}\label{eq:stocastic perturbed kolmogorov}
\partial_t\Psi(\cdot,t)= \left( \LLL^{\ast}_0 + \varepsilon^2 \LLL^{\ast}_2  \right)\Psi(\cdot,t).
\end{equation}
Here, we are using the adjointness between the Fokker-Planck and the backward-Kolmogorov equation (also invoked in Eq.~\eqref{eq:linear response time dependent}), whereby the operators $\LLL_0^{\ast}$ and $\LLL_2^{\ast}$ are given by:
\begin{subequations}
\begin{align}
	\LLL^{\ast}_{0}\Psi  &= \FF \cdot \nabla \Psi \label{L10ast}; \\
	\LLL^{\ast}_{2}\Psi  &= \frac{1}{2}\lp\Gamma\Gamma^{\top} \rp : \nabla^{2}\Psi, \label{eq:pert op 2}
\end{align}
\end{subequations}
for $\Psi $ in $D(\LLL^{\ast}_{2})$. Here we immediately observe that stochastic forcing enters at second order in $\varepsilon$ when considering the operator representation of the stochastic process \eqref{eq:sto ode}. 
Using the semigroup expansions of Section~\ref{sec:time dependent forcing}, we can derive the leading order response of the steady state to the introduction of noise:
	\begin{align}\label{eq:response sto perturbations}
		\delta^{(2)}\left[\Psi\right](t) &=\varepsilon^2 \int \rho_0(\dd \xx) \int_{0}^t\LLL_{2}^{\ast}\Psi(\xx(t))\dd s =\frac{\varepsilon^2}{2}\int\rho_0(\dd \xx) \int_{0}^t\lp\Gamma(\xx)
		\Gamma^{\top}(\xx)  \rp : \nabla^{2}\Psi(\xx(s)))\dd s.
	\end{align}
Note that we have evaluated the deterministic Koopman operator $e^{t\LLL_0^{\ast}}$ to the observable $\Psi$ to give $e^{t\LLL_0^{\ast}}\Psi(\xx)=\Psi(\xx(t))$, where $\xx(t)$ solves Eq.~\eqref{eq:sto ode} for $\varepsilon = 0$.

Formula Eq.~\eqref{eq:response sto perturbations} coincides with what was obtained in \cite{Abramov2017}, where instead, the author resorted to computing the impact of th eperturbation directly on the tangent space in order to evaluate $\lp\Gamma(\xx)
\Gamma^{\top}(\xx)  \rp : \nabla^{2}\Psi(\xx(t))$. 
An approach based on second-order response theory was followed by \cite{Lucarini2012}, yet yielding a different formula compared to Abramov's \cite{Abramov2017}. In such a work, the Wiener increments $\dd W_t$ were treated as the time-modulation of a sequence of $p$ perturbations with spatial patterns defined by the columns of the $\Gamma$ matrix. To illustrate this idea, it is useful to rewrite Eq.~\eqref{eq:sto ode} as a series of applied vector fields modulated by Wiener increments:
\begin{align}\label{decomposed stochastic perturbation}
	\dd \xx = \FF (\xx) \dd t + \varepsilon \sum_{i=1}^p g_i(t)\Gamma_{:,i} (\xx),
\end{align}
where $\Gamma_{:,i}$ is the $i$th column of $\Gamma$ and $g_i(t)$ are assumed to satisfy $\E \left[g_i(t)\right]=0$ and $\E \left[g_k(t)g_l(s)\right]=\delta(t-s)\delta_{k,l}$ 
for every $i,k,l=1,\ldots,p$, where $\delta_{k,l}$ is the Kronecker delta. For a given realisation of the noise indexed by $\sigma$, one can apply Eq.~\eqref{eq:linear response time dependent} to obtain the linear response correction:
\begin{align}\label{eq:linear response before averaging}
\delta ^{(1)}_{\sigma}\left[ \Psi \right] (t) := \varepsilon  \sum_{i=1}^{p}\int_{-\infty}^{\infty} \GGG_{i}(s)g_i(t-s)\dd s,
\end{align}
where $\GGG_{i}(t) = \Theta(t)\int \rho_0(\dd \xx) \LLL^{\ast}_1e^{t\LLL^{\ast}_0}\Psi(\xx) $ is the Green function associated with the vector field $\Gamma_{:,i}$ for every $i$. Since, $g_i$ corresponds to a Wiener increment, taking averages over all realisations $\sigma$ makes the linear responses vanish: $\mathbb{E}\left[  \delta ^{(1)}_{\sigma}\left[ \Psi \right] \right]= 0$. Contrarily, the second order response $\delta^{(2)}_{\sigma}\left[\Psi \right] $ survives the averaging and is simplified to a single-time-variable integral of the form:
	\begin{align}
		\tilde{\delta}^{(2)}\left[\Psi  \right](t) &:= \mathbb{E}\left[  \delta ^{(2)}_{\sigma}\left[ \Psi \right] \right]=  \frac{ \varepsilon^2 }{2} \sum_{k=1}^p\int \rho_0(\dd\xx)\int_{-\infty}^{\infty} \Theta (s) \LLL_{1,k}^{\ast}\LLL_{1,k}^{\ast}e^{s\LLL_0^{\ast}}\Psi(\xx)\dd s, \label{eq:lucarini formula 1} 
	\end{align}
where $\LLL^{\ast}_{1,i} = \Gamma_{:,i}\cdot \nabla$. To compare Eq.~\eqref{eq:lucarini formula 1} with Eq.~\eqref{eq:response sto perturbations}, it is handy to disentangle the column vector fields $\Gamma_{:,i}$ from the correlation matrix $\Gamma$ so that, after some vector-matrix manipulations, one deduces, instead, that:
\begin{align}\label{eq:second order response operator form}
	\delta^{(2)}\left[\Psi\right](t)= \frac{ \varepsilon^2 }{2}\sum_{k=1}^p
	\int \rho_0(\dd \xx) \int_0^t\lp \Gamma_{:,k}(\xx)\Gamma_{:,k}^{\top}(\xx) \rp : \nabla ^2\Psi(\xx(s))\dd s.
\end{align}

\begin{remark}
	In case the applied fields $\Gamma_{;,i}$ in Eq.~\eqref{decomposed stochastic perturbation} are not modulated by independent time-functions $g_i(t)$, we would have that $\mathbb{E}\left[g_k(t)g_l(s)\right]=\delta(t-s)c_{k,l}$, for some correlation factor $c_{k,l}$. This would make cross-terms appear, so that
	\begin{align}
		\tilde{\delta}^{(2)}\left[\Psi  \right](t) &:= \mathbb{E}\left[  \delta ^{(2)}_{\sigma}\left[ \Psi \right] \right]= \frac{ \varepsilon^2 }{2} \sum_{k=1}^p\sum_{l=1}^p C_{k,l} \int \rho_0(\dd\xx)\int_{-\infty}^{\infty} \Theta (s) \LLL_{1,k}^{\ast}\LLL_{1,l}^{\ast}e^{s\LLL_0^{\ast}}\Psi(\xx)\dd s,  
	\end{align}
and,
\begin{align}
	\delta^{(2)}\left[\Psi\right](t)= \frac{ \varepsilon^2 }{2}\sum_{k=1}^p\sum_{l=1}^p C_{k,l}
	\int \rho_0(\dd \xx) \int_0^t\lp \Gamma_{:,k}(\xx)\Gamma_{:,l}^{\top}(\xx) \rp : \nabla ^2\Psi(\xx(x))\dd s.
\end{align}
However, it is easy to recast this case into the form of Eq.~\eqref{eq:sto ode} upon a modification of the covariance matrix.
	
\end{remark}

We remark that the approach described in the present paper using operator expansions yields a different formula to that of \cite{Lucarini2012} for the seemingly same question. In order to understand this discrepancy, we need to clarify our interpretation of the noise term. The algebraic manipulations used in \cite{Lucarini2012} rely on using the standard rules of calculus, which is appropriate if one considers Stratonovich's interpretation of noise. And, indeed, the concept behind the use of the second order response approach relied on treating the noise terms as fast forcings, thus considering the physical limit of time-scale separation going to infinity. This, again, points in the direction of Stratonovich's interpretation \cite{Pavliotis2008,pavliotisbook2014}.

To test our hypothesis,  we shall investigate whether the approach taken in \cite{Lucarini2012} can be recast more rigorously according to the formalism used in this paper. To this end and for clarity, let us consider Eq.~\eqref{decomposed stochastic perturbation} where a single perturbation $\Gamma_{:,1}$ is applied, namely, $p=1$. This means 
we can rewrite Eq.~\eqref{decomposed stochastic perturbation} as:
\begin{equation}\label{eq:sto ode rewritten}
\dd \xx (t) = \FF (\xx)\dd t + \varepsilon \tilde \Gamma H \dd W_t,
\end{equation}
where $\xx (t)$ is in $\R^d$, $\tilde \Gamma =  \mathrm{diag}(\Gamma_{:,1})$, $W_t$ is a $d$-dimensional Wiener process and $H$ is a $d\times d$ matrix defined as:
\begin{equation}
H=\mathbbm{1}_d\mathbf{e}_{1}^{\top}=\begin{bmatrix}
1 \\ \vdots \\ 1
\end{bmatrix}[1,0,\ldots,0].
\end{equation}
We proceed by expanding the operator product $\LLL_{1,1} ^{\ast}\LLL_{1,1}^{\ast}$, which is instrumental in  Eq.~\eqref{eq:lucarini formula 1}. Letting $f$ denote a generic twice differentiable function:
\begin{subequations}
	\begin{align}
		\LLL_{1,1}^{\ast}&\LLL_{1,1}^{\ast}f = \Gamma_{:,1} \cdot \nabla \lp \Gamma_{:,1} \cdot \nabla f \rp \\&= \Gamma_{:,1} \cdot \left[ \lp \Gamma_{:,1} \cdot \nabla \rp \nabla f + \lp \nabla f \cdot \nabla \rp \Gamma_{:,1}\right] \nonumber \\&\hspace{1.2cm}+ \Gamma_{:,1} \cdot\left[\Gamma_{:,1} \times (\nabla \times \nabla f)+\nabla f \times (\nabla \times \Gamma_{:,1})  \right] \\&= \Gamma_{:,1} \cdot \left[ \lp \Gamma_{:,1} \cdot \nabla \rp \nabla f + \lp \nabla f \cdot \nabla \rp \Gamma_{:,1} +\nabla f \times (\nabla \times \Gamma_{:,1})  \right] \\&= \Gamma_{:,1} \cdot \lp \Gamma_{:,1} \cdot \nabla \rp \nabla f +\Gamma_{:,1}\cdot  \lp \nabla f \cdot \nabla \rp \Gamma_{:,1} +\Gamma_{:,1} \cdot \lp \nabla f \times \nabla \times \Gamma_{:,1} \rp\\&= \Gamma_{:,1} \cdot \lp \Gamma_{:,1} \cdot \nabla \rp \nabla f +\Gamma_{:,1}\cdot  \lp \nabla f \cdot \nabla \rp \Gamma_{:,1} -\nabla f \cdot \lp \Gamma_{:,1} \times \nabla \times \Gamma_{:,1} \rp\\&= \lp \Gamma_{:,1} \Gamma_{:,1}^{\top} \rp :\nabla^2f +\Gamma_{:,1}\cdot  \lp \nabla f \cdot \nabla \rp \Gamma_{:,1} -\nabla f \cdot \lp \Gamma_{:,1} \times \nabla \times \Gamma_{:,1} \rp \label{eq: deriv 7}.
	\end{align}
\end{subequations}
We immediately observe that if $\nabla \Gamma_{:,1} \equiv 0$, the response formula derived in Eqs.~\eqref{eq:response sto perturbations} and \eqref{eq:second order response operator form} coincide. Consequently, agreement is found if the forcing is additive white noise. The second and third terms on the RHS of Eq.~\eqref{eq: deriv 7} are extra terms that can be associated with the It\^o-to-Stratonovich correction \cite{pavliotisbook2014}. Indeed, we see this by regarding Eq.~\eqref{eq:sto ode rewritten} in the Stratonovich sense:
\begin{equation}\label{eq:sto ode rewritten stratonovich}
\dd \xx (t) = \FF (\xx)\dd t + \varepsilon \tilde \Gamma H \circ \dd W_t.
\end{equation}
The corresponding It\^o conversion reads and expands as
\begin{subequations}\label{eq:sto ode converted}
	\begin{align}
		\dd \xx (t) &= \left[\FF (\xx) +  \frac{\varepsilon^2}{2}\left[ \nabla \cdot \lp (\tilde \Gamma H)(\tilde \Gamma H)^{\top} \rp - \lp \tilde \Gamma H \rp \nabla \cdot \lp \tilde \Gamma H\rp \right] \right]\dd t + \varepsilon \tilde \Gamma H \dd W_t \\&=\left[\FF (\xx) +  \frac{\varepsilon^2}{2}\left[ \nabla \cdot \lp \Gamma_{:,1}\Gamma_{:,1}^{\top}\rp - \lp\nabla \cdot \Gamma_{:,1} \rp\Gamma_{:,1}  \right] \right]\dd t + \varepsilon \tilde \Gamma H  \dd W_t .
	\end{align}
\end{subequations}
Note that an $\varepsilon^2$ term has appeared in the drift component. Consequently, the backward-Kolmogorov equation associated with Eq.~\eqref{eq:sto ode converted} will only have perturbation operators of order $\varepsilon ^2$. Following the expansions that lead to Eq.~\eqref{eq:response sto perturbations}, the perturbation operator $\LLL^{\ast}_{2}$ for Eq.~\eqref{eq:sto ode converted} writes as:
\begin{subequations}
	\begin{align}
		\LLL^{\ast}_{2}f &= \frac{1}{2}\left[ \nabla \cdot \lp \Gamma_{:,1}\Gamma_{:,1}^{\top}\rp - \lp\nabla \cdot \Gamma_{:,1} \rp\Gamma_{:,1}  \right]\cdot \nabla f +\frac{1}{2}\Gamma_{:,1}\Gamma_{:,1}^{\top}:\nabla^2f  \\&=\frac{1}{2}\left[ \lp \Gamma_{:,1} \cdot \nabla \rp \Gamma_{:,1}  \right]\cdot \nabla f + \frac{1}{2}\Gamma_{:,1}\Gamma_{:,1}^{\top}:\nabla^2f \\&=\frac{1}{2}\left[ \frac{1}{2}\nabla \lp \Gamma_{:,1} \cdot \Gamma_{:,1} \rp - \Gamma_{:,1} \times \nabla \times \Gamma_{:,1} \right]\cdot \nabla f + \frac{1}{2}\Gamma_{:,1}\Gamma_{:,1}^{\top}:\nabla^2f. \label{eq:expansion second order stratonovich}
	\end{align}
\end{subequations}
Hence, comparing Eq.~\eqref{eq:expansion second order stratonovich} and Eq.~\eqref{eq: deriv 7}, we are left with showing that $\nabla \lp \Gamma _{:,1}\cdot \Gamma_{:,1} \rp \cdot \nabla f = 2\Gamma_{:,1}\cdot \lp \nabla f \cdot \nabla  \rp \Gamma_{:,1} $, which can be seen by direct evaluation:
\begin{subequations}
	\begin{align}
		\frac{1}{4}\nabla \lp \Gamma_{:,1} \cdot \Gamma_{:,1} \rp \cdot \nabla f &= \frac{1}{4}\begin{bmatrix}
			2\Gamma_{1,1}\partial_{x_1}\Gamma_{1,1} + \ldots +2\Gamma_{d,1}\partial_{x_1}\Gamma_{d,1} \\ \vdots \\2\Gamma_{1,1}\partial_{x_d}\Gamma_{1,1} + \ldots +2\Gamma_{d,1}\partial_{x_d}\Gamma_{d,1}
		\end{bmatrix}\cdot \nabla f \\&= \frac{1}{2}\left[ \Gamma_{:,1} \cdot \partial_{x_1}f\partial_{x_1}\Gamma_{:,1} + \ldots + \Gamma_{:,1} \cdot \partial_{x_d}f\partial_{x_d}\Gamma_{:,1} \right] \\&=\frac{1}{2}\Gamma_{:,1} \cdot \lp \nabla f \cdot \nabla \rp\Gamma_{:,1}.
	\end{align}
\end{subequations}
Which proves the claim. The general case is formalised in the following proposition:
\begin{proposition}\label{prop:ito stratonovich}
	Consider the SDE~\eqref{eq:sto ode} in Stratonovich form and let $\mathbf{H}:\R^d\longrightarrow \R^d$ be the  Stratonovich-to-It\^o correction:
	\begin{equation}\label{eq:ito strat correction propostition}
	\mathbf{H} =  \frac{1}{2}\left[ \nabla \cdot \lp \Gamma  \Gamma^{\top} \rp - \Gamma \lp  \nabla \cdot \Gamma^{\top}  \rp  \right].
	\end{equation}
	Then,
	\begin{equation}
	\tilde{\delta}^{(2)}\left[\Psi \right] - \delta^{(2)}\left[\Psi \right] = 
	\varepsilon^2\int\rho_0(\dd\xx) \int_{-\infty}^{\infty} \Theta(s)\HH(\xx)\cdot\nabla\Psi(\xx(s))\dd s,
	\end{equation}	
	where the first and second terms of the left-hand side are defined in Eq.~\eqref{eq:lucarini formula 1} and Eq.~\eqref{eq:second order response operator form}, respectively.
\end{proposition}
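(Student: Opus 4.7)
The plan is to reproduce, for a general noise matrix $\Gamma\in\mathbb{R}^{d\times p}$, the single-column identity (Eq.~\eqref{eq: deriv 7}) that the authors already carried out, and then show that the extra terms assemble into exactly the Stratonovich-to-It\^o drift $\mathbf{H}\cdot\nabla$ acting on $\Psi$ evaluated along the unperturbed flow.

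First, I would write both responses in a common form. By Eq.~\eqref{eq:lucarini formula 1},
\begin{equation*}
\tilde{\delta}^{(2)}[\Psi](t)=\frac{\varepsilon^2}{2}\sum_{k=1}^{p}\int\rho_0(\dd\xx)\int_{-\infty}^{\infty}\Theta(s)\,\LLL^{\ast}_{1,k}\LLL^{\ast}_{1,k}e^{s\LLL^{\ast}_{0}}\Psi(\xx)\,\dd s,
\end{equation*}
while Eq.~\eqref{eq:second order response operator form} can be rewritten, using $e^{s\LLL^{\ast}_0}\Psi(\xx)=\Psi(\xx(s))$, as
\begin{equation*}
\delta^{(2)}[\Psi](t)=\frac{\varepsilon^2}{2}\sum_{k=1}^{p}\int\rho_0(\dd\xx)\int_{-\infty}^{\infty}\Theta(s)\,\bigl(\Gamma_{:,k}\Gamma_{:,k}^{\top}\bigr):\nabla^2 e^{s\LLL^{\ast}_0}\Psi(\xx)\,\dd s.
\end{equation*}
Since these integrals have the same structure, it suffices to establish the pointwise operator identity
\begin{equation*}
\sum_{k=1}^{p}\LLL^{\ast}_{1,k}\LLL^{\ast}_{1,k}-\sum_{k=1}^{p}\bigl(\Gamma_{:,k}\Gamma_{:,k}^{\top}\bigr):\nabla^2 \;=\;2\,\mathbf{H}\cdot\nabla
\end{equation*}
acting on smooth observables.

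Next, I would simply sum the single-column derivation in Eq.~\eqref{eq: deriv 7} over $k=1,\dots,p$. For each fixed $k$, applying $\LLL^{\ast}_{1,k}=\Gamma_{:,k}\cdot\nabla$ twice to a test function $f$ yields, by the same vector-calculus identity the authors used,
\begin{equation*}
\LLL^{\ast}_{1,k}\LLL^{\ast}_{1,k}f=\bigl(\Gamma_{:,k}\Gamma_{:,k}^{\top}\bigr):\nabla^2 f+\Gamma_{:,k}\cdot\bigl(\nabla f\cdot\nabla\bigr)\Gamma_{:,k}-\nabla f\cdot\bigl(\Gamma_{:,k}\times\nabla\times\Gamma_{:,k}\bigr).
\end{equation*}
Summing on $k$ leaves the second-derivative piece equal to $(\Gamma\Gamma^{\top}):\nabla^2 f$, so what remains to be identified with $2\mathbf{H}\cdot\nabla f$ is
\begin{equation*}
\sum_{k=1}^{p}\Bigl[\Gamma_{:,k}\cdot(\nabla f\cdot\nabla)\Gamma_{:,k}-\nabla f\cdot\bigl(\Gamma_{:,k}\times\nabla\times\Gamma_{:,k}\bigr)\Bigr].
\end{equation*}

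The main obstacle — and the only substantive step — is matching this expression with $2\mathbf{H}\cdot\nabla f$ using the definition \eqref{eq:ito strat correction propostition}. I would proceed in components: write $\bigl[\nabla\cdot(\Gamma\Gamma^{\top})\bigr]_i=\sum_{j,k}\partial_{x_j}(\Gamma_{i,k}\Gamma_{j,k})$ and $\bigl[\Gamma(\nabla\cdot\Gamma^{\top})\bigr]_i=\sum_{j,k}\Gamma_{i,k}\partial_{x_j}\Gamma_{j,k}$, so that
\begin{equation*}
2H_i=\sum_{j,k}\Gamma_{j,k}\partial_{x_j}\Gamma_{i,k}+\sum_{j,k}\Gamma_{i,k}\partial_{x_j}\Gamma_{j,k}-\sum_{j,k}\Gamma_{i,k}\partial_{x_j}\Gamma_{j,k}=\sum_{j,k}\Gamma_{j,k}\partial_{x_j}\Gamma_{i,k}.
\end{equation*}
Contracting with $\partial_{x_i}f$ and summing on $i$ gives $2\mathbf{H}\cdot\nabla f=\sum_k \Gamma_{:,k}\cdot(\nabla f\cdot\nabla)\Gamma_{:,k}+\sum_k\bigl[\text{a curl-type remainder}\bigr]$, and the single-column calculation that led from the second line of Eq.~\eqref{eq:expansion second order stratonovich} to its final form shows that this remainder is precisely $-\sum_k\nabla f\cdot(\Gamma_{:,k}\times\nabla\times\Gamma_{:,k})$. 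This is exactly the same identity the authors verified for $p=1$; the general case is its sum over columns.

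Finally, setting $f=e^{s\LLL^{\ast}_0}\Psi$ and inserting the established operator identity into the difference $\tilde{\delta}^{(2)}[\Psi]-\delta^{(2)}[\Psi]$ gives
\begin{equation*}
\tilde{\delta}^{(2)}[\Psi]-\delta^{(2)}[\Psi]=\varepsilon^2\int\rho_0(\dd\xx)\int_{-\infty}^{\infty}\Theta(s)\,\mathbf{H}(\xx)\cdot\nabla\Psi(\xx(s))\,\dd s,
\end{equation*}
which is the claim. The only place where care is needed is the sign bookkeeping in the vector triple product and in the identification of the two terms in $\mathbf{H}$ with the ``$(\Gamma\cdot\nabla)\Gamma$-type'' and ``curl'' contributions; everything else is routine once the $p=1$ identity is in hand.
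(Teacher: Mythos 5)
Your proposal is correct and follows essentially the same route as the paper's own proof: both reduce the general case to the single-column computation already carried out in Eqs.~\eqref{eq: deriv 7}--\eqref{eq:expansion second order stratonovich} and sum over the columns of $\Gamma$, identifying the leftover first-order terms with the Stratonovich-to-It\^o drift $\mathbf{H}$. The only organisational difference is that you phrase the reduction as a pointwise operator identity $\sum_k \LLL^{\ast}_{1,k}\LLL^{\ast}_{1,k}-\lp\Gamma\Gamma^{\top}\rp:\nabla^2=2\,\mathbf{H}\cdot\nabla$ and check $2H_i=\sum_{j,k}\Gamma_{j,k}\partial_{x_j}\Gamma_{i,k}$ componentwise for general $p$, whereas the paper converts the Stratonovich SDE to It\^o form, splits $\mathbf{H}$ into per-column corrections $\HH_k$, and invokes linearity of the leading-order response --- a distinction of presentation rather than substance.
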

\begin{proof}
	We note the following chain of equalities:
	\begin{subequations}
		\begin{align}
			\dd \xx &= \FF \dd t + \varepsilon \Gamma  \circ \dd W_t = \left[\FF + \HH \right]\dd t + \varepsilon \Gamma  \dd W_t\\&=\left[\FF + \varepsilon^2\sum_{k=1}^{p} \HH_{k} \right]\dd t + \varepsilon\sum_{i=1}^p\tilde{\Gamma}_iH_i \dd W_t,
		\end{align}
	\end{subequations}
	where $\tilde{\Gamma}_i = \mathrm{diag}\lp \Gamma_{:,i} \rp$, $H_i=\mathbbm{1}_d\mathbf{e}_i^{\top}$ and $\HH_{k}$ is defined as:
	\begin{equation}
	\HH_{k}=\frac{1}{2}\left[ \nabla \cdot \lp \Gamma_{:,k}\Gamma_{:,k}^{\top}\rp - \tilde \Gamma_{k}\nabla \cdot \tilde \Gamma_k\right].
	\end{equation}
	Now, it is enough to repeat the argument started in Eq.~\eqref{eq:sto ode rewritten} for each $\tilde{\Gamma}_i$ and then use the linearity of the leading order response to extend it for $i=1,\ldots,p$.\end{proof}

\section{Response of Correlations and Power Spectra}\label{sec:response correlations}
In \cite{lucarini2017b} it was shown how to compute the change in the correlation properties of a deterministic chaotic system resulting from a static forcing applied to its dynamics. This amounts to considering the case described in Eq. 
\eqref{eq:sto ode 2} with $\Sigma=0$ and with $g(t)=1$.
 Indicating with  $C^{\varepsilon}_{\Psi,\Phi}$  the  correlation function as  two square-integrable observables $\Psi$ and $\Phi$ in the perturbed dynamics, 
 one gets:
\begin{subequations}\label{eq:perturbed correlations}
	\begin{align}
	C^{\varepsilon}_{\Psi,\Phi}(t)=&\int \rho_\varepsilon(\dd \xx)e^{t(\LLL_0^{\ast}+\varepsilon\LLL_1^\ast)}\Psi(\xx) \Phi(\xx)\\
	&=\int \rho_0(\dd \xx)e^{t\LLL_0^{\ast}}\Psi(\xx) \Phi(\xx) \label{eq:referencecorrelations}  \\&+\varepsilon\int \rho_0(\dd \xx)\int_{0}^{\infty} \LLL_1^\ast e^{(t+s)\LLL_0^{\ast}}\Psi(\xx) e^{s\LLL_0^{\ast}}\Phi(\xx) \dd s \label{eq:perturbed correlations 1} \\&+\varepsilon\int \rho_0(\dd \xx) \int_{0}^{t} e^{(t-s)\LLL_0^{\ast}}\LLL_1^{\ast}e^{s\LLL_0^{\ast}}\Psi(\xx)\Phi(\xx) \dd s + \mathcal{O}\lp \varepsilon^2 \rp.\label{eq:perturbed correlations 2}
	\end{align}
\end{subequations}
where $\LLL_0^\ast$ and $\LLL_1^\ast$ are given in Eq.~\eqref{L10ast} and \eqref{L1asta}, respectively. The term shown in Eq.~\eqref{eq:referencecorrelations} gives the unperturbed correlation. The leading order correction to correlation functions consists of two different components corresponding to the perturbation expansion of the invariant measure and the Koopman operator, respectively. 
This term can also be written  as 
\begin{equation}
  \varepsilon\int \rho_1(\dd \xx)\int_{0}^{\infty}  e^{t\LLL_0^{\ast}}\Psi(\xx) \Phi(\xx),\label{rho1modcorr} 
\end{equation}
which shows that it can be interpreted as the expectation value of the unperturbed form of time-delayed product of the two observables in the first order correction of the measure. 
Instead, the contribution given in Eq.~\eqref{eq:perturbed correlations 2} is qualitatively different and vanishes if $t=0$, because in such a case the regular response theory for observables applies. 

We  can generalise the previous result in two different directions. If we stick to deterministic background dynamics (so that $\LLL_0^\ast$ is given by Eq. \eqref{L10ast}) but consider, instead, adding stochastic forcing to the system--- as in Eq.~\eqref{eq:sto ode}--- the change in the correlation properties of the system can be evaluated by performing the following substitutions in Eqs.~\eqref{eq:perturbed correlations}-\eqref{rho1modcorr}: $\LLL_1^\ast\rightarrow \LLL_2^\ast$, where the latter operator is defined in Eq.~\eqref{eq:pert op 2}; $\rho_1\rightarrow \rho_2$, where the second-order correction to the invariant measure is defined in Eq.~\eqref{eq:expansion time dependent density}; and $\varepsilon\rightarrow\varepsilon^2$. If, instead, we consider a stochastic reference dynamics (so that $\LLL_0^\ast$ is given by Eq.~\eqref{L0aststo}) and want to study the impact of changing the noise law, we can insert the operator $\LLL_1^\ast$ given in Eq.~\eqref{L1astb} into Eqs.~\eqref{eq:perturbed correlations}. Of course Eqs.~\eqref{eq:perturbed correlations} apply seamlessly in the case we alter the drift term of the stochastic differential equation.

Applying the Fourier transform to $C^{\varepsilon}_{\Psi,\Phi}(t)$  yields the response in spectral domain, again following \cite{lucarini2017b}. We can, however, use the resolvent formalism and Laplace transforms of semigroups to express the correlation function in spectral domain \cite{engel2000}. The Laplace transform of a correlation function reads as:
\begin{equation}
\mathfrak{L}[C^{\varepsilon}_{\Psi,\Phi}](z)= \int \rho_{\varepsilon}(\dd \xx) R(z,\LLL^{\ast}_0+\varepsilon \LLL^{\ast}_1)\Psi(\xx) \Phi(\xx) ,
\end{equation}
for $\mathfrak{Re}z>0$ and where $R (z,\LLL_0) = \left( z - \LLL_0 \right)^{-1}$ is the resolvent operator. Expanding $\mathfrak{L}[C^{\varepsilon}_{\Psi,\Phi}](z)$ in powers of $\varepsilon$ we get a linear response for correlation functions in terms of derivatives with respect to $\varepsilon$:
\begin{equation}\label{eq:laplace 1}
\begin{aligned}
\frac{\dd }{\dd \varepsilon}\mathfrak{L}[C^{\varepsilon}_{\Psi,\Phi}](z)|_{\varepsilon=0} =  \int \rho_1(\dd \xx) R(z , \LLL^{\ast}_0 )\Psi(\xx)\Phi(\xx) + \int \rho_0(\dd \xx) R(z,\LLL^{\ast}_0)\LLL^{\ast}_1R(z,\LLL^{\ast}_0)\Psi(\xx)\Phi(\xx).
\end{aligned}
\end{equation}
This formula can be obtained by taking the Laplace transform of Eq.~\eqref{eq:perturbed correlations}, although the resolvent formalism makes the calculations substantially easier. We, thus, provide an explicit link between the response in frequency domain and the spectral properties of the unperturbed generator $\LLL^{\ast}_0$ via the resolvent. Similarly to the discussion above, upon the replacement of $\LLL_0^\ast$ or $\LLL_1^{\ast}$, one can deduce the response formula for deterministic or stochastic perturbations.




In \cite{Lucarini2012} linear response of power spectral densities is related to the modulus of the susceptibility function in case of white-noise modulated forcings. Indeed, there it is argued that if $\mathfrak{F}\left[C^{\varepsilon}_{\Psi,\Psi} \right](\omega)$ is the perturbed power spectrum of the observable $\Psi$ then we have:
\begin{equation}\label{eq:kk susceptibility}
\mathbb{E}\left[\mathfrak{F}\left[C^{\varepsilon}_{\Psi,\Psi} \right](\omega)\right]-\mathfrak{F}\left[C_{\Psi,\Psi} \right](\omega)\approx \mathbb{E}\left[  \left| \delta ^{(2)}_{\sigma}\left[ \Psi \right]\right|^2 \right]\approx \varepsilon^2 \left| \mathfrak{F}\left[\GGG\right](\omega) \right|^2.
\end{equation}
In order to prove this, the author of \cite{Lucarini2012} invoked the Wiener-Khinchin theorem and applied it to the autocorrelation function of $\delta^{(1)}_{\sigma}[\Psi](t)$ before taking averages over $\sigma$. Equation~\eqref{eq:kk susceptibility} appears to have a term missing if one compares it to Eq.~\eqref{eq:laplace 1}. This is because that the author uses $\mathfrak{F} \left[ C^\varepsilon_{\Psi,\Psi} \right]\approx \mathfrak{F} \left[ \int \rho_0(\dd \xx) \Psi(\xx)e^{t\LLL}\Psi(\xx) \right]$ as an approximation. Indeed, before averaging over realisations, the leading order response of the flow is proportional to $\varepsilon$ whereas the perturbed invariant measure responds proportionally to $\varepsilon^2$. Therefore, the RHS of Eq.~\eqref{eq:laplace 1} yields a single term from which Eq.~\eqref{eq:kk susceptibility} follows.

Notice that the left hand side of the equation above is especially easy to calculate if one is let to sample the perturbed and unperturbed dynamics. Thus, it is possible to obtain estimates of the modulus of the susceptibility function in a fully empirical way. One is left with finding the actual locations of the poles, which the modulus itself cannot reveal. This would be attained by noticing that the logarithm of the susceptibility function can be written as:
\begin{equation}\label{eq:logarithm of susceptibility}
\log \lp \mathfrak{F}\left[ \GGG \right](\omega) \rp= \log \lp\left| \mathfrak{F}\left[ \GGG \right](\omega) \right|\rp + i \varphi(\omega),
\end{equation}
for every $\omega$ and some phase function $\varphi$. Assuming, further, that $\log \lp \mathfrak{F}\left[ \GGG \right](\omega) \rp $ is analytic, Kramers-Kronig relations are invoked to link its real and imaginary parts shown in Eq.~\eqref{eq:logarithm of susceptibility} and, thus, that of $\mathfrak{F}\left[ \GGG \right](\omega) $. This way, Eq.~\eqref{eq:kk susceptibility} constitutes a practical way of estimating the susceptibility function of a deterministic flow by using time series of the stochastically forced vector field.

\section{Conclusions}\label{sec:general remarks}

In this note we have shown how the formalism of operator semigroups allows for a compact derivation of response formulas describing at leading order the impact of adding forcings of either deterministic or stochastic nature in both cases of systems whose background dynamics is deterministic or stochastic. Linear response  gives the leading order correction to expectation values due to extra deterministic forcings acting on either stochastic or chaotic dynamical systems. The response is also linear in the intensity of the (extra) noise for background stochastic dynamics. Note that we are able to accommodate results pertaining to non-stationary noise laws. 

In the case of stochastic systems, the linear response formulas have been expressed through a spectral decomposition on the basis of the eigenvectors and eigenfunctions of the unperturbed Fokker-Planck/backward-Kolmogorov generator; see Eq.~\eqref{eq:eigenvalues linear response time dependent}. This angle gives a basis for bottom-up construction of response functions and relaxation rates for non-equilibrium systems, and clarifies the general applicability and interpretation of the FDT. It also casts the computation of the response of an observable to perturbations in the context of the problem of finding suitable and accurate estimates for the modes of the Koopman operator \cite{Mezic2005,Budinisic2012}. This research area has benefitted considerably from the use of data-driven methods \cite{Kutz2016,Lusch2018}. Koopman modes are ordered according to their decay rate, thus giving a rationale for, in most cases, the greater relevance of the contributions coming from slowly decaying modes; see Eqs. \eqref{eq:eigenvalues linear response time dependent}-\eqref{eq:alphas definition}. {\color{black}We remark that such a connection between response formulas and Koopmanism applies under the conditions that the residual spectrum decays on considerably faster time-scales than the point spectrum and/or that we consider observables whose projection on the modes associated with such residual spectrum is negligible. An instance where this framework holds can be found in \cite[Section D]{Wouters2016,SantosGutierrez2021}. Broadly, this indicates that our approach could be suitable for the cases where model reduction techniques based on the Koopman operator are applicable \cite{Budinisic2012,Tu.ea.2014,Lusch2018}.} 

Future endeavours should aim at finding computationally amenable ways of identifying the eigenvalues and the associated timescales  appearing in Eq.~\eqref{eq:eigenvalues linear response time dependent}. We remark that this problem is attracting considerable interest in the climate community, because it would allow to draw a convincing link between climate variability and climate response \cite{Ghil2020}. Nonetheless, recent proposals in this direction--- while indeed interesting and promising \cite{Torres2021a,Torres2021b,Bastiaansen2021}--- are yet unable to deal with complex eigenvalues, thus missing out an essential ingredient of the dynamics of nonequilirium systems \cite{SantosGutierrez2021}. Empirically-learnt Markov-matrices are a promising technique that should be explored, even taking into account the need to necessarily operate in a much reduced dimension phase space for any application where the dynamics is high-dimensional \cite{chekrounrough2014,chekroun2019c}. Note that in the case of Markov chains it is easy to estimate the actual radius of convergence of the response operators, i.e. the maximum intensity of the applied forcing \cite{Lucarini2016,SantosJSP}. Contrarily, we highlight the results of \cite{kenkre1971, kenkre1973} which present a non-perturbative response formula based on the Mori-Zwanzig projection operator \cite{zwanzig}; these formula is revisited in Appendix~\ref{sec:homogeneous equation}. In this paper we have on purpose avoided to discuss any aspect of convergence of the perturbative formulas presented above; this is left for future investigations.


When stochastically perturbing systems whose underlying dynamics is chaotic, the second order-response gives the leading-order correction. We have been able to reconcile two results published in the literature and showed that the formulas given in \cite{Lucarini2012} and in \cite{Abramov2017} coincide when one interprets the results of the former study taking into account the physically relevant Stratonovich interpretation of noise, which indeed complies with the heuristic procedure followed there, where the standard rules of calculus had been used throughout the paper. This result helps to better understand how the invariant measure convergences to the physical one when the intensity of the noise becomes infinitesimal. 



Finally, we have adapted the mathematical technology developed for studying the change in the expectation value of observables due to the application of external forcings combined with the  perturbative expansion of the resolvent operator \cite{engel2000} to generalise the results presented in \cite{lucarini2017b}. We have developed leading response formulas for the two-point correlations of either deterministic or stochastic dynamical systems. Also in this case, it is easy to show that the order of correction is linear in the case of deterministic  perturbation to either stochastic or chaotic systems and of stochastic perturbations of stochastic systems, while it is quadratic in the case of purely stochastic perturbations applied to chaotic systems. The weak noise assumption is fundamental to derive Eq.~\eqref{eq:kk susceptibility}, originally presented and discussed with the help of numerical simulations in \cite{Lucarini2012}. A more detailed study of this relation would be most valuable since its implementation is straightforward and could provide new operational algorithms for computing the spectrum of the Fokker-Planck operators and the subsequent susceptibility functions, which are key to understand critical transitions in complex systems \cite{Lucarini2018JSP,Tantet2018,Tantet2018b}. 

On a more general note, the effects of stochastic noise on the forced variability of a deterministic system are yet to be fully understood. For instance, numerical studies show that under the weak-noise assumption--- potentially at the linear response level, as in Section~\ref{sec:stochastically forced}--- the low-frequency variability spectrum of a multi-modal deterministic system can be heavily amplified as results of an apparent stabilisation of the different coexisting regimes \cite{kwasniok2014}. While the counterintuitive fact that noise stabilise an otherwise unstable deterministic system is well known, it is clear that the underlying dynamical structure of the deterministic system is also essential for such behaviour \cite{Shen2007,Turcotte2008,Parker2011,Clusella2017,Meyer2018,Kolba2019,Manchein2022}. Indeed, it has been argued that the systems displaying intermittency or activation are more likely to be stabilised under the noisy perturbations \cite{dorrington_thesis}. 
The present paper does not tackle this question, although we here provide formal methods to treat systems subject to general external forcings and it is hoped that our results will help understand the interplay between deterministic and stochastic feature of the dynamics of rather general class of systems featuring non-trivial variability.

\paragraph{Acknowledgements.}

The authors would like to thank Micka\"el Chekorun, Jeroen Wouters, and Joshua Dorrington for their encouragement, comments, and suggestions. VL acknowledges the support received from the Horizon 2020 project TiPES (grant no. 820970) and from the EPSRC project EP/T018178/1.

\appendix

\section{Kenkre's Response Formula}\label{sec:homogeneous equation}

The limit of weak forcing strength is instrumental in deriving meaningful linear response formulas that capture statistical changes due to an applied external field. When small, yet finite-sized perturbations are introduced, power series are resorted to and their convergence depends on a radius of expansion. To obtain the full response to any order in $\varepsilon$, one seems inevitably obliged to solve the perturbed Fokker-Planck equation~\eqref{eq:fpe 2} to find the probability state at a certain time to evaluate the phase average a given observable. In the present section, we aim at deriving a scalar equation for the response function $\langle \Psi , \rho(\cdot,t) \rangle$ in Eq.~\eqref{eq:response function a} that does not require a power expansion nor the full knowledge of perturbed probability density function. To do so, we revisit the calculations presented in \cite{kenkre1971, kenkre1973} in the context of quantum mechanical systems, at a time when linear response and the subsequent fluctuation-dissipation theorem were just being introduced. Now, well after Ruelle's mathematical proof of linear response, it is our desire to review such results but in a stochastic framework and with references to recent advances in the treatment of operator equations. To do this, the equation of motion--- regardless it being stochastic or deterministic--- will suitably be projected to derive, with the aid of the Mori-Zwanzig formalism \cite{zwanzig}, a scalar equation for the response $\langle \Psi , \rho(\cdot,t) \rangle$ where the only independent variable is $\langle \Psi , \rho(\cdot,t) \rangle$ itself; this is, an \emph{homogeneous} equation.


To achieve this goal, we shall start by considering the forced SDE~\eqref{eq:sto ode 2}. Since the introduction of perturbations makes the equations of motion non-autonomous, the usual exponential operator no longer serves as a notation for the solution of Eq.~\eqref{eq:fpe 2} unless we consider \emph{time-ordered} exponentials \cite{gill2017}.
It is useful to define the full response function $R_{\Psi}(t)$ as:
\begin{equation}\label{eq:full response function}
R_{\Psi}(t):=\left \langle \Psi ,\rho(\cdot,t) \right \rangle = \int \Psi (\xx)\rho(\xx,t)\dd\xx,
\end{equation}
where $\rho(\cdot,t)$ is a solution of Eq.~\eqref{eq:fpe 2}. Traditionally, the response function is defined as the difference between the perturbed and unperturbed means of the observable $\Psi$, but since the latter is stationary, this definition will not introduce differences with other formulations.

The idea is now to project the Fokker-Planck equation according to $\mathbb{P}f = \left(\int\Psi f \right)\rho_0 $, so that the Mori-Zwanzig formalism yields an equation in terms of projected operators instead of the full ones. Upon substitution and simplification we find that the evolution equation for the response function $R_{\Psi}$ can be written in more compact terms as:
\begin{align}\label{eq: full response equation}
\partial _t R_{\Psi}(t) &=B(t)R_{\Psi}(t)+\int_{0}^t K(t,s) R_{\Psi}(s)\dd s,
\end{align}
where the newly introduced function $B(t)$ and memory kernel $K(t,s)$ are defined as:
\begin{subequations}
	\begin{align}
	B(t)&=\xi^{-1}\int\Psi(\xx)\mathcal{L}(t)\rho_0(\xx)\dd \xx \label{eq: B function general}, \\ \label{eq: kernel general}
	K(t,s)&=\xi^{-1} \int\Psi(\xx)\mathcal{L}(t)e^{\int _{s}^t\lp 1 - \mathbb{P} \rp \mathcal{L}(\tau)\dd \tau}\lp 1 - \mathbb{P} \rp \mathcal{L}(s)\rho_0(\xx)\dd \xx,
	\end{align}
\end{subequations}
where $\xi = \left\langle \Psi , \rho_0 \right\rangle \neq 0$. Equation~\eqref{eq: full response equation} becomes a scalar and homogeneous equation on $R_{\Psi}$, where there is not explicit reference to the time-dependent measure $\rho(\xx, t)$ that would be obtained by solving the full Fokker-Planck equation \eqref{eq:fpe 2}. Collecting the leading order terms in Eq.~\eqref{eq: full response equation} one obtains the usual linear response, as shown in \cite{kenkre1971}. 

Solving the non-Markovian equation \eqref{eq: full response equation} would yield the response function and, in principle, should be equal to the perturbative approach in Eq.~\eqref{eq:response function a}, provided that the parameter $\varepsilon$ is within the radius of convergence. In the work of \cite{kenkre1971}, there is no proof of the convergence of Eq.~\eqref{eq: full response equation}, which would boil down to determining the decay of $K(t)$ as $t$ tends to infinity. If one assumes that $g$ is constant and that $\LLL$ generates a quasi-compact semigroup--- see the beginning of Section~\ref{sec:time dependent forcing}---, one can view $\LLL \mathbb{P}$ as a compact perturbation of $\LLL$, in which case it would generate a quasi-compact semigroup \cite{engel2000}, meaning that the decay is dominated by a sum of exponential functions, cf. Eq.~\eqref{eq:eigenvalues linear response time dependent}. In case $K(t,s)=K(t-s,0)$, Eq.~\eqref{eq: full response equation} can be solved explicitly in spectral domain and one could relate $K(t-s,0)$ with the high-order responses. Finding further simplifications of the memory kernel $K(t,s)$ would be of great interest, since it would provide a means of computing full responses $R_{\Psi}$ without resorting to perturbation expansions, nor solving the perturbed Fokker-Planck equation.

\small
\newcommand{\etalchar}[1]{$^{#1}$}

\end{document}